\newcommand{\Cc}{\mathscr{C}}
\newcommand{\N}{\mathbb{N}}
\newcommand{\nc}[1]{\newcommand{#1}}
\newcommand{\rnc}[1]{\renewcommand{#1}}
\rnc{\O}{{\cal O}}
\nc{\w}{\overline{w}}
\nc{\x}{\overline{x}}
\nc{\y}{\overline{y}}
\nc{\z}{\overline{z}}
\rnc{\a}{\overline{a}}
\rnc{\b}{\overline{b}}
\rnc{\c}{\overline{c}}
\rnc{\d}{\overline{d}}
\newtheorem{theorem}{Theorem}[section]
\crefname{lemma}{Lemma}{Lemmas}
\crefname{theorem}{Theorem}{Theorems}
\begin{document}
\def\figurename{Figure}

\setcounter{page}{129}
\publyear{24}
\papernumber{2175}
\volume{191}
\issue{2}

\finalVersionForARXIV

\title{Elimination Distance to Bounded Degree  on \\ Planar Graphs Preprint}

\author{Alexander Lindermayr, \ Sebastian Siebertz, \ Alexandre Vigny\thanks{Address for
                  correspondence: University of Bremen, Bibliothekstra{\ss}e 1, 28359 Bremen, Germany.  \newline \newline
                    \vspace*{-6mm}{\scriptsize{Received January 2022; \ accepted April 2024.}}}
\\
University of Bremen\\
Germany\\
\{linderal, siebertz, vigny\}@uni-bremen.de
}

\runninghead{A. Lindermayr et al.}{Elimination Distance to Bounded Degree on Planar Graphs Preprint}

\maketitle

\vspace*{-4mm}
\begin{abstract}
  We study the graph parameter \emph{elimination distance to bounded
    degree}, which was introduced by Bulian and Dawar in their study
  of the parameterized complexity of the graph isomorphism problem. We
  prove that the problem is fixed-parameter tractable on planar
  graphs, that is, there exists an algorithm that given a planar graph
  $G$ and integers $d,k$ decides in time $f(k,d)\cdot n^c$ for a
  computable function~$f$ and constant $c$
  whether the elimination distance of
  $G$ to the class of degree $d$ graphs is at most $k$.
\end{abstract}


\section{Introduction}

Structural graph theory offers a wealth of parameters that measure the
complexity of graphs or graph classes. Among the most prominent
parameters are \emph{treedepth} and \emph{tree\-width}, which
intuitively measure the resemblance of graphs with stars and trees,
respectively. Other commonly studied structurally restricted graph
classes are the class of \emph{planar graphs}, classes that exclude a
fixed graph as a \emph{minor} or \emph{topological minor}, classes of
\emph{bounded expansion} and \emph{nowhere dense} classes.

Once we have gained a good understanding of a graph class $\Cc$, it is
natural to study classes whose members are \emph{close} to graphs in
$\Cc$. One of the simplest measures of distance to a graph class $\Cc$
is the number of vertices or edges that one must delete (or add) to a
graph~$G$ to obtain a graph from $\Cc$. Guo et
al.~\cite{guo2004structural} formalized this concept under the name
\emph{distance from triviality}. For example, the size of a vertex
cover is the distance to the class of edgeless graphs and the size of
a feedback vertex set is the distance to the class of acyclic graphs. More
generally, for a graph~$G$ a vertex set $X$ is called a
\emph{$c$-treewidth modulator} if the treewidth of~$G-X$ is at most
$c$, hence, the size of a $c$-treewidth modulator corresponds to the
distance to the class of graphs of treewidth at most $c$. This concept
was introduced and studied by Gajarsk\'y et al.\
in~\cite{gajarsky2017kernelization}.

The \emph{elimination distance} to a class $\Cc$ of graphs measures
the number of recursive deletions of vertices needed for a graph $G$
to become a member of $\Cc$. More precisely, a graph $G$ has
elimination distance $0$ to $\Cc$ if $G \in \Cc$, and otherwise
elimination distance~$k+1$, if in every connected component of $G$ we
can delete a vertex such that the resulting graph has elimination
distance $k$ to $\Cc$. Elimination distance was introduced by Bulian
and Dawar~\cite{bulian2016graph} in their study of the parameterized
complexity of the graph isomorphism problem. Elimination distance
naturally generalizes the concept of treedepth, which corresponds to
the elimination distance to the class $\Cc_0$ of edgeless graphs.

Elimination distance has very nice algorithmic applications. On the
one hand, small elimination distance to a class $\Cc$ on which
efficient algorithms for certain problems are known to exist, may
allow to lift the applicability of these algorithms to a larger class
of graphs. For example, Bulian and Dawar~\cite{bulian2016graph} showed
that the graph isomorphism problem is fixed-parameter tractable when
parameterized by the elimination distance to the class~$\Cc_d$ of
graphs with maximum degree bounded by $d$, for any fixed integer $d$.
Recently, Hols \emph{et al.}~\cite{DBLP:journals/siamdm/HolsKP22} proved the
existence of polynomial kernels for the vertex cover problem
parameterized by the size of a deletion set to graphs of bounded
elimination distance to different classes of graphs. The related
concept of recursive backdoors has recently been studied in the
context of efficient SAT solving~\cite{dreier2024sat,MahlmannSV21}.

On the other hand, it is an interesting algorithmic question by itself
to determine the elimination distance of a given graph $G$ to a class
$\Cc$ of graphs. It is well known (see
e.g.~\cite{bodlaender1998rankings, nevsetvril2012sparsity,
 reidl2014faster}) that computing treedepth,
i.e.\ elimination distance to $\Cc_0$, is fixed-parameter tractable.
More precisely, we can decide in time $f(k)\cdot n$ whether an $n$-vertex
graph $G$ has treedepth at most $k$. Bulian and Dawar
proved in~\cite{bulian2017fixed} that computing the elimination
distance to any minor-closed class~$\Cc$ is fixed-parameter
tractable when parameterized by the elimination distance.
They also raised the question
whether computing the elimination distance to the class $\Cc_d$ of graphs
with maximum degree at most $d$ is fixed-parameter
tractable when parameterized by the elimination distance and $d$. Note
that this question is not answered by their result for minor-closed
classes, since $\Cc_d$
is not closed under taking~minors.

For $k,d\in \N$, we denote by $\Cc_{k,d}$ the class of all
graphs that have elimination distance at most~$k$ to~$\Cc_d$. It is
easy to see that for every fixed $k$ and $d$ we can formulate the
property that a graph is in~$\Cc_{k,d}$ by a sentence in monadic
second-order logic (MSO). By the famous theorem of
Courcelle~\cite{courcelle1990monadic} we can test every MSO-property
$\varphi$ in time $f(|\varphi|, t)\cdot n$ on every $n$-vertex graph
of treewidth $t$ for some computable function $f$. Hence, we can decide
for every $n$-vertex graph $G$ of treewidth $t$ 
whether~$G\in~\Cc_{k,d}$ in time $f(k,d,t)\cdot n$ for some computable
function $f$. However, for $d \geq 3$ already the class $\Cc_d$ has
unbounded treewidth, and so the same holds for $\Cc_{k,d}$ for all
values of $k$. Thus, Courcelle's Theorem cannot be directly applied to derive
fixed-parameter tractability of the problem in full generality.

On the other hand, it is easy to see that the graphs in $\Cc_{k,d}$ exclude the complete graph $K_{k+d+2}$ as a topological minor, and hence,
for every fixed $k$ and $d$, the class~$\Cc_{k,d}$ in particular has
bounded expansion and is nowhere dense. 
We can
efficiently test
first-order~(FO) properties on bounded expansion and nowhere dense classes~\cite{dvovrak2013testing,grohe2017deciding}, however, first-order
logic is too weak to express the elimination distance
problem. This follows from the
fact that first-order logic is too weak to express even connectivity of
a graph or to define connected components.

While we are unable to resolve the question of
Bulian and Dawar in full
generality, in this work we initiate the quest of determining the
parameterized complexity of elimination distance to bounded degree
graphs for restricted classes of inputs. We prove that for every
$n$-vertex graph~$G$
that excludes~$K_5$ as a minor (in particular for every planar graph)
we can test whether $G\in \Cc_{k,d}$ in time $f(k,d)\cdot n^c$ for a
computable function $f$ and constant $c$. Hence, the problem is fixed-parameter
tractable with parameters $k$ and $d$ when restricted to $K_5$-minor-free graphs.

\begin{theorem}[Main result]\label{thm-main}
  There is an algorithm that for a $K_5$-minor-free input graph~$G$
  with $n$ vertices and integers $k,d$, decides in time
  $f(k,d)\cdot n^c$ whether $G$ belongs to
  $\Cc_{k,d}$, where $f$ is a computable
  function and $c$ is a constant.

\medskip
  In case $G$ is $K_5$- and $K_{3,3}$-minor-free (that is, $G$ is planar), the running time is $f(k,d)\cdot n^3$ for a computable
   function~$f$.
\end{theorem}

Observe that the result is not implied by the result of
Bulian and Dawar for minor-closed classes, as the $K_5$-minor-free
subclass of $\Cc_{d}$ is not minor-closed. It is natural to consider
as a next step classes that exclude some fixed graph as a minor or as
a topological minor, and finally to resolve the problem in full
generality.

\medskip
 To solve the problem on $K_5$-minor-free graphs we combine multiple
techniques from parameterized complexity theory and structural graph
theory. First, we use the fact that the property of having elimination
distance at most $k$ to $\Cc_d$ for fixed $k$ and $d$ is MSO definable, and hence
efficiently solvable by Courcelle's Theorem on graphs of bounded
treewidth. If the input graph $G$ has small treewidth, we can hence
solve the instance by Courcelle's Theorem.
If $G$ has large treewidth, we conclude that it
contains a large grid minor~\cite{robertson1986graph}. We then have two cases.

\medskip
First, if in sufficiently many branch sets there are vertices of degree exceeding
$d$, we conclude that the graph does not belong to $\Cc_{k,d}$. This is because
the grid minor will not be sufficiently scattered by~$k$ elimination rounds.

Second, if only few branch sets contain vertices of degree exceeding $d$, we can
find a branch set that is ``far'' from any problematic vertices. The vertices of
such branch set are called {\em irrelevant}, that is, vertices whose
deletion does not change containment in $\Cc_{k,d}$. By iteratively
removing irrelevant vertices we arrive either in the first case, or at an
instance of small treewidth. In both cases, we can conclude.
The irrelevant vertex technique was introduced
in~\cite{robertson1995graph} and is by now a standard technique in
parameterized algorithms, see~\cite{thilikos2012graph} for a survey.

\paragraph{Recent advances.}

We first presented \cref{thm-main} in a conference paper~\cite{lindermayr2020elimination}.
In this journal version, the presentation of the proof has been vastly
improved and simplified. While the main result is the same,
some observations enabled us to remove unnecessary complicated notion
and to simplify the proof.

Since the first publication of our results much further work has been done
on algorithms computing elimination distances.
Most notably, Agrawal et al.~\cite{DBLP:journals/siamdm/AgrawalKPRS22} establishing fixed-parameter tractability of the problem on general graphs.
In fact, they showed that their approach yields an efficient algorithm to decide
the elimination distance to any graph class that can be described by a finite
set of forbidden induced subgraphs. In a second work
it was shown that there exists a FPT algorithm for computing elimination distance
to classes characterized by the exclusion of a family of finite graphs as
topological minors~\cite{DBLP:journals/corr/abs-2104-09950}. In another line of
research, Agrawal and Ramanujan initiated the study of computing elimination distance to the class of cluster graphs~\cite{DBLP:conf/iwpec/Agrawal020}.

Very recently, Diner et al.\ introduced in~\cite{DBLP:journals/gc/DinerGST22} the notion of block elimination distance, which
replaces the connectivity property in Bulian and Dawar's classical notion of elimination distance with biconnectivity,
while Fomin, Golovach and Thilikos further generalized several of these prior results by considering elimination distances to graph classes
expressible by a restricted first-order logic formulas~\cite{DBLP:journals/tocl/FominGT22}. The results are also implied by the very recent algorithmic
meta theorem for separator logic~\cite{DBLP:conf/icalp/PilipczukSSTV22,DBLP:journals/tocl/SchirrmacherSV23}.

\section{Preliminaries}\label{sec:prelims}

A \emph{graph} $G$ consists of a set of vertices $V(G)$ and a set of edges
$E(G)$. We assume that graphs are finite, simple
and undirected, and we write $\{u,v\}$ for an edge between the vertices
$u$ and $v$. For a set of vertices $S \subseteq V(G)$, we denote the
subgraph of $G$ induced by the vertices $V(G) \setminus S$ by $G - S$.
If~$S = \{a\}$, we write $G - a$.
The \emph{degree} of a vertex~$v$ is the number of edges $e$ such that $v \in e$. The maximum degree of a graph is the largest degree of its vertices.

\medskip
A \emph{partial order}
on a set $V$ is a binary relation $\leq$ on $V$ that is reflexive,
anti-symmetric and transitive. A set $W\subseteq V$ is a \emph{chain}
if it is totally ordered by $\leq$.
If $\leq$ is a partial order on~$V$, and for every element $v\in V$ the set
$V_{\leq v}\coloneqq \{u\in V\mid u\leq v\}$ is a chain, then $\leq$
is a \emph{tree order}.
Note that the covering relation of a tree
order is not necessarily a tree, but may be a forest. An
\emph{elimination order} on a graph $G$ is a tree order $\leq$ on
$V(G)$ such that for every edge $\{u,v\}\in E(G)$ we have either
$u\leq v$ or $v\leq u$. The \emph{depth} of a vertex~$v$ in a tree order~$\leq$ is the size the set $V_{<v}\coloneqq \{u\in V\mid u<v\}$.
The \emph{depth} of a tree order $\leq$ is maximal depth among all vertices.

\medskip
The \emph{treedepth} of a graph $G$ is defined recursively as follows.
\[\mathrm{td}(G)=\begin{cases}
  0 & \text{if $G$ is edgeless,}\\
  1+\min\{\mathrm{td}(G-v)\mid v\in V(G)\} & \text{if $G$ is connected and}\\
  & \hspace{3.18cm}\text{
  not edgeless,}\\
  \max\{\mathrm{td}(H)\mid \text{$H$ connected component of
    $G$}\}&\text{otherwise}.
\end{cases}\]

A graph $G$ has treedepth at most $k$ if and only if there exists an
elimination order on~$G$ of depth at most $k$. If the longest path
in $G$ has length $k$, then its treedepth is bounded by $k$ and
an elimination order of at most this depth can be found in linear time by a depth-first-search.

\smallskip
Elimination distance to a class $\Cc$ naturally generalizes the
concept of treedepth. Let~$\Cc$ be a class of graphs. The
\emph{elimination distance} of $G$ to $\Cc$ is defined recursively as
\[\mathrm{ed}_\Cc(G)=\begin{cases}
  0 & \text{if $G\in \Cc$,}\\
  1+\min\{\mathrm{ed}_\Cc(G-v)\mid v\in V(G)\} & \text{if $G\not\in \Cc$ and $G$ is connected,}\\
  \max\{\mathrm{ed}_\Cc(H)\mid \text{$H$ connected component of
    $G$}\}&\text{otherwise}.
\end{cases}\]

We denote by $\Cc_d$ the class of all graphs of maximum degree at most
$d$ and by $\Cc_{k,d}$ the class of all graphs with elimination
distance at most $k$ to $\Cc_d$. Note for instance that
$\mathrm{td}(G) =k$ if and only if~$G\in~\Cc_{k,0}$.
We write $\mathrm{ed}_d(G)$ for
$\mathrm{ed}_{\Cc_d}(G)$.

\begin{definition}[Definition 4.2 of~\cite{bulian2016graph}]\label[definition]{def:elim-order}
  A tree order $\le$ on G is an {\em elimination order to degree~$d$}
  if for every $v \in V(G)$ the set
  $S_v \coloneqq \{u \in G \mid \{u,v\} \in E(G), u \not\le v$ and
  $v \not\le u\}$ satisfies either:
  \begin{enumerate}
  \itemsep=0.9pt
  \item $S_v=\emptyset$ or
  \item $v$ is $\le$-maximal, $|S_v|\le d$, and for all $u\in S_v$, we
    have $\{w \mid w < u\} = \{w \mid w < v\}$.
  \end{enumerate}
\end{definition}

A more general notion of elimination order to a class $\Cc$ was given
in the dissertation thesis of Bulian~\cite{bulian2017parameterized},
which is however not needed in this generality for our purpose.

\begin{proposition}[Proposition 4.3 in~\cite{bulian2016graph}]
  A graph $G$ satisfies $\mathrm{ed}_d(G) \le k$ if, and only if,
  there exists an elimination order to degree $d$ of depth~$k$ for~$G$.
\end{proposition}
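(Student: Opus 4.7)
The plan is to handle the two implications separately, each by an induction tailored to one side of the equivalence.

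For the forward direction I proceed by induction on $k$, mirroring the recursive definition of $\mathrm{ed}_d$. In the base case $k=0$ we have $G\in\Cc_d$, and the discrete antichain order on $V(G)$ (no two distinct vertices comparable) works: every vertex is vacuously $\le$-maximal, $S_v$ is the full neighborhood of $v$ with $|S_v|=|N(v)|\le d$, and all downward sets are empty so the equality clause is trivial. In the inductive step, if $G$ is disconnected I apply the induction hypothesis to each component and take the disjoint union of the resulting forests with distinct components incomparable; since each vertex's neighborhood lies in its own component, the $S_v$-condition transfers unchanged. If $G$ is connected and $\mathrm{ed}_d(G)\le k+1$, I pick $v$ realising $\mathrm{ed}_d(G-v)\le k$, apply the induction hypothesis to get an order $\le'$ on $G-v$ of depth at most $k$, and extend by declaring $v$ strictly less than every other vertex. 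The depth grows by exactly one, the new root $v$ satisfies $S_v=\emptyset$, and for every other vertex both $S$-sets and $\le$-maximality are unchanged while all downward sets grow uniformly by $\{v\}$, so the equality clause is preserved.

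For the backward direction I induct on $|V(G)|$, treating $k$ as a parameter that can shrink along the induction. If $G\in\Cc_d$ we are immediately done, and if $G$ is disconnected I restrict the tree order to each component $H$ and invoke the induction hypothesis: the restriction is still an elimination order to degree $d$ of depth at most $k$ because the neighbors of any $v\in V(H)$ lie inside $H$ so $S_v$ is preserved, and whenever a non-$\le$-maximal vertex of $G$ becomes $\le_H$-maximal after restriction it must already have satisfied the first clause $S_v=\emptyset$, which still holds. The remaining case is $G$ connected with $G\notin\Cc_d$, in which $k\ge 1$ since $k=0$ would force $G\in\Cc_d$. Here the plan is to show that the forest underlying $\le$ is in fact a single tree, delete its root $r$, and apply the induction hypothesis to $G-r$ with depth bound $k-1$.

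The main obstacle is the uniqueness-of-root claim. Suppose for contradiction that the forest contains a non-trivial tree $T$ with root $r$ together with at least one vertex outside $V(T)$; since $G$ is connected, some edge $\{v,u\}$ has $v\in V(T)$ and $u\notin V(T)$. As $v$ and $u$ lie in different trees they are $\le$-incomparable, so Definition~\ref{def:elim-order} forces both to be $\le$-maximal with $\{w\mid w<v\}=\{w\mid w<u\}$. But these downward sets lie inside two disjoint trees, so equality is only possible when both are empty, meaning $v$ has empty downward set and is therefore a root of the forest; since $v\in V(T)$ this gives $v=r$, but $T$ being non-trivial makes $r$ non-$\le$-maximal, a contradiction. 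Hence $V(T)$ is a union of connected components of $G$, and connectivity forces $V(T)=V(G)$. With the unique root $r$ identified, $r\le v$ for every $v$, so $r\notin S_v$ for any $v$; restricting $\le$ to $V(G)\setminus\{r\}$ therefore leaves every $S_v$ unchanged, merely shrinks every non-empty downward set by $\{r\}$ (which preserves the equality in the second clause), and lowers the depth by exactly one. By the induction hypothesis we obtain $\mathrm{ed}_d(G-r)\le k-1$, and the recursive definition yields $\mathrm{ed}_d(G)\le k$.
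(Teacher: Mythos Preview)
The paper does not prove this proposition; it is quoted from Bulian and Dawar~\cite{bulian2016graph} and used as a black box, so there is nothing in the present paper to compare your argument against. Your proof is correct and is the natural one. Two cosmetic points are worth tightening. In the forward direction, the phrase ``apply the induction hypothesis to each component'' is slightly loose, since a component $H$ of $G$ only satisfies $\mathrm{ed}_d(H)\le k+1$, not $\le k$; what you are really doing is applying the already-established \emph{connected} case at level $k+1$ (which in turn invokes the hypothesis at level $k$ on $H-v$) to each connected component. In the backward direction, your contradiction argument begins by assuming the forest contains a non-trivial tree $T$; you should first observe that the forest cannot be a pure antichain, since an antichain order gives $S_v=N(v)$ for every $v$ and would force $G\in\Cc_d$, contrary to the case hypothesis. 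With these clarifications the proof is complete.
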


The following lemma is easily proved by induction on $k$.

\begin{lemma}\label[lemma]{lem:cannonical-order}
  For every graph $G$ and elimination order $\leq$ to degree $d$ for
  $G$, we can compute in quadratic time an elimination order $\preceq$
  to degree $d$ for $G$, with depth not larger than the depth of
  $\leq$, and with the additional property that for every $v\in V(G)$,
  if $C,C'$ are distinct connected components of $G-V_{\preceq v}$
  (or of $G$),
  then the vertices of $C$ and $C'$ are incomparable with respect to
  $\preceq$.
\end{lemma}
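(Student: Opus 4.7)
The plan is to proceed by induction on $|V(G)|$ and construct $\preceq$ from $\leq$ explicitly. If $G$ has several connected components $G_1, \dots, G_m$, then the restriction of $\leq$ to each $V(G_i)$ is an elimination order to degree $d$ for $G_i$ of depth at most $\mathrm{depth}(\leq)$. Induction yields a canonical order $\preceq_i$ on each $G_i$, and I form $\preceq$ as the disjoint union of the $\preceq_i$ with cross-component pairs declared incomparable. Since every edge of $G$ stays inside some $G_i$, the conditions of \cref{def:elim-order} for $\preceq$ are inherited from the $\preceq_i$; and the canonical property holds because for any $u \in V(G_i)$ the components of $G - V_{\preceq u}$ are either other $G_j$ or components of $G_i - V_{\preceq_i u}$, and all such pairs are pairwise $\preceq$-incomparable by construction or by the inductive canonical property.

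Assume now that $G$ is connected. If $G \in \Cc_d$, take $\preceq$ to be the antichain on $V(G)$, which is canonical of depth $0$. Otherwise, the crucial step is the following claim: if $G$ is connected and $G \notin \Cc_d$, then $\leq$ has a unique minimum $v$ on $V(G)$, and in particular $v < u$ for every $u \in V(G) \setminus \{v\}$. Granting this claim, the restriction of $\leq$ to $V(G) \setminus \{v\}$ has depth at most $\mathrm{depth}(\leq) - 1$, and induction produces a canonical $\preceq'$ on $G - v$ of depth at most $\mathrm{depth}(\leq) - 1$. Extending $\preceq'$ by placing $v$ below every other vertex defines $\preceq$: for $u \neq v$ the elimination-order conditions of \cref{def:elim-order} are inherited from $\preceq'$, $S_v = \emptyset$ in $\preceq$ holds because every edge incident to $v$ is now $\preceq$-comparable, and the canonical property at any $u \neq v$ reduces to that of $\preceq'$ via $V_{\preceq u} = \{v\} \cup V_{\preceq' u}$, while at $u = v$ the components of $G - v$ are pairwise $\preceq'$-incomparable by the canonical property of $\preceq'$ applied to $G - v$.

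To prove the key claim, partition $V(G)$ into the subtrees $T_r \coloneqq V_{\geq r}$ indexed by the $\leq$-minima $r$. For any edge $\{u, w\}$ of $G$ with $u \in T_r$, $w \in T_{r'}$ and $r \ne r'$, the endpoints $u$ and $w$ are $\leq$-incomparable, so \cref{def:elim-order} forces both to be $\leq$-maximal with $|S_u|, |S_w| \le d$ and $V_{<u} = V_{<w}$. If $u \ne r$, then $r \in V_{<u} = V_{<w}$ would place the two distinct $\leq$-minima $r$ and $r'$ in the chain $V_{\le w}$, which is impossible; hence $u = r$ with $V_{<u} = \emptyset$ and $T_r = \{r\}$, and symmetrically $w = r'$ with $T_{r'} = \{r'\}$. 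Suppose now for contradiction that $V(G)$ has two distinct $\leq$-minima $r_1, r_2$. For an arbitrary vertex $v \in V(G)$ with root $r$, choose $r_j \in \{r_1, r_2\}$ with $r_j \neq r$, and consider a $v$--$r_j$ path in $G$. Some edge of that path must leave $T_r$, and by the observation above the first such edge crosses at a vertex $w_i = r$ with $T_r = \{r\}$; the initial segment from $v$ to $w_i = r$ lies in $T_r = \{r\}$, so $v = r$. Hence every vertex of $G$ is an isolated, $\leq$-maximal root, so $\deg_G(v) = |S_v| \le d$ for all $v$, contradicting $G \notin \Cc_d$. Each operation in the construction (computing connected components, locating the unique $\leq$-minimum, and restricting the order) is polynomial, and the recursion strictly decreases $|V(G)|$, giving overall polynomial running time. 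The main obstacle is precisely the uniqueness claim above, which hinges on the careful use of \cref{def:elim-order} for edges between distinct subtrees of $\leq$.
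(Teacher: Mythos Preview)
Your proof is correct and follows essentially the same approach the paper indicates: the paper gives no proof beyond the remark that the lemma ``is easily proved by induction on $k$,'' and your argument is precisely a fleshed-out version of that induction (phrased on $|V(G)|$, but each step either splits into components or removes the unique root, which is the same recursion). Your careful verification that a connected $G\notin\Cc_d$ has a unique $\leq$-minimum is the only nontrivial point, and your use of \cref{def:elim-order} on a cross-tree edge handles it correctly.
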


Let $G$ be a graph. A graph $H$ with vertex set $\{v_1,\ldots, v_n\}$
 is a \emph{minor} of~$G$, written $H\preceq G$, if there
 are connected and pairwise vertex disjoint subgraphs
 $H_1,\ldots, H_n\subseteq G$ such that
 if $\{v_i,v_j\}\in E(H)$, then there are $w_i\in V(H_i)$ and
 $w_j\in V(H_j)$ such that $\{w_i,w_j\}\in E(G)$.
 We call the subgraph $H_i$ the \emph{branch set} of the vertex
 $v_i$ in $G$. If all $H_i$ have radius at most~$r$, then we say
 that $H$ is a \emph{depth-$r$ minor} of $G$, in symbols $H\preceq_r G$.
 If $G$ is a graph and $\mathcal{H}=\{H_1,\ldots, H_n\}$
 is a set of pairwise vertex disjoint subgraphs of $G$, then the graph
 $H$ with vertex set $\{v_1,\ldots, v_n\}$ and edges $\{v_i,v_j\}\in
 E(H)$ if and only if there is an edge between a vertex of $H_i$ and
 a vertex of $H_j$ in~$G$, the \emph{minor induced by $\mathcal{H}$}.
 If $\bigcup_{1\leq i\leq n} V(H_i)=V(G)$, we call $\mathcal{H}$ a \emph{minor model of $H$} that \emph{subsumes all
 vertices} of~$G$.

\smallskip
 We denote by $K_t$ the complete graph on $t$ vertices.
 We denote by $G_{m\times n}$ the grid with~$m$ rows and~$n$
 columns, that is, the graph with vertex set
 $\{v_{i,j}\mid 1\leq i\leq m, 1\leq j\leq n\}$ and
 edges $\{v_{i,j},v_{i',j'}\}$ for $|i-i'|+|j-j'|=1$.

 For our purpose we do not have to define the notion of treewidth
 formally. It is sufficient to note that if a graph $G$ contains an
 $n\times n$ grid as a minor, then $G$ has treewidth at least
 $n+1$ and vice versa, that large treewidth forces a large
 grid minor, as stated in the next theorem.

 \begin{theorem}[Excluded Grid Theorem]\label{thm:grid-all}
There exists a function $g$ such that for every integer
$t\geq 1$, every graph of treewidth at least $g(t)$ contains
the $t\times t$ grid as a minor. Furthermore, such a grid
minor can be computed in polynomial time.
 \end{theorem}

The theorem was first proved by Robertson and
Seymour in~\cite{robertson1986graph}. Improved bounds
and corresponding efficient
algorithms were subsequently obtained. We refer to
the work of Chuzhoy and Tan~\cite{DBLP:journals/jctb/ChuzhoyT21}
for the currently best known bounds on the function
$g$ and further pointers to the literature concerning efficient
algorithms.
For planar graphs we employ the following much better bounds.

\begin{theorem}[Planar Excluded Grid Theorem,
  see~Theorem 7.23 of~\cite{book15parameterized-algorithms}]\label{thm:grid-theorem}
  There exists an~$O(n^2)$ algorithm that, for a given $n$-vertex planar graph $G$ and integer $t$ either outputs a tree decomposition of $G$ of width at most $5t$, or construct a minor model of the $t\times t$-grid in~$G$.
\end{theorem}

The second black-box we use is Courcelle's Theorem,
stating that we can test MSO properties efficiently on graphs
of bounded treewidth. We use standard notation from logic and refer to the literature
for all undefined notation, see e.g.~\cite{libkin2013elements}.

\begin{theorem}[Courcelle's Theorem~\cite{courcelle1990monadic}]
\label{thm:courcelle}
There exists a function $f$ such that for every MSO-sentence
$\varphi$ and every $n$-vertex graph $G$ of treewidth~$t$
we can test whether
$G\models\varphi$ in time $f(|\varphi|,t)\cdot n$.
\end{theorem}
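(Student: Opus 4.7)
The plan is to follow the standard route via tree automata on labeled trees, which reduces the problem to evaluating a finite bottom-up tree automaton on the tree decomposition. First I would compute, in time $f_1(t)\cdot n$, a tree decomposition $(T,(B_x)_{x\in V(T)})$ of $G$ of width $O(t)$ using Bodlaender's linear-time algorithm (or a constant-factor approximation if a slightly worse dependence on $t$ is acceptable). After rooting $T$ and making it nice so that every node has a bounded number of children and is of one of the standard types (leaf, introduce-vertex, forget-vertex, join), I would encode $(T,(B_x)_x)$ as a ranked labeled tree $T_G$ over a finite alphabet $\Sigma_t$ that depends only on $t$: each node is labeled by the isomorphism type of the subgraph induced on its bag together with the identification of shared vertices with its parent's bag.

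Next I would show that for every MSO sentence $\varphi$ on graphs there is an MSO sentence $\varphi^\star$ over $\Sigma_t$-labeled trees such that $G\models\varphi$ iff $T_G\models\varphi^\star$. The translation is syntactic: quantifiers over vertices or vertex sets in $G$ become quantifiers over pairs (node of $T$, slot in its bag) respectively over sets of such pairs, while preserving equality and the edge relation by consulting the local labels. Care is needed to ensure that a vertex is represented consistently across the nodes of $T$ whose bag contains it; this is a routine but somewhat tedious MSO-definable condition exploiting the fact that the set of bags containing a given vertex forms a connected subtree of $T$.

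Having obtained $\varphi^\star$, I would invoke the Doner--Thatcher--Wright theorem: every MSO sentence on finite ranked labeled trees is equivalent to a finite bottom-up tree automaton $A_{\varphi^\star}$ over the same alphabet. The construction is effective and standard (inductive on the formula, with determinization and projection at quantifier steps), and yields an automaton whose size is bounded by some computable function $f_2(|\varphi|,t)$. Finally, evaluating $A_{\varphi^\star}$ on the tree $T_G$ of size $O(n)$ takes time $O(|A_{\varphi^\star}|\cdot n)$ by a single bottom-up pass. Setting $f(|\varphi|,t)=f_1(t)+O(f_2(|\varphi|,t))$ gives the claimed bound.

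The main obstacle is the quantifier-step of the automaton construction, where projection forces a subset construction and produces a tower-of-exponentials blow-up in $|\varphi|$; this is what prevents the function $f$ from being elementary, but for the purposes of the theorem it is enough that $f$ is computable, so no further optimization is needed here. A secondary technical point is the correctness of the MSO translation from graphs to the tree encoding, which must handle the mismatch between a vertex of $G$ and its multiple occurrences in $T_G$; I would address this once and for all by introducing, for each vertex variable, an auxiliary representative (e.g., the topmost node of $T$ whose bag contains that vertex), definable in MSO from the decomposition encoding.
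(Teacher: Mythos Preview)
Your proposal is a correct and standard outline of the proof of Courcelle's Theorem: compute a bounded-width tree decomposition in linear time, encode it as a labeled tree, translate the graph MSO sentence to a tree MSO sentence, convert the latter to a tree automaton via Doner--Thatcher--Wright, and evaluate bottom-up. The technical caveats you flag (non-elementary blow-up at projection steps, handling multiple occurrences of a vertex across bags) are the right ones.

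However, there is nothing to compare against: the paper does not prove this theorem. It is stated with a citation to~\cite{courcelle1990monadic} and explicitly introduced as one of the two black-boxes the paper relies on (``The second black-box we use is Courcelle's Theorem\ldots''). So while your sketch is fine as a self-contained argument, the paper's ``proof'' is simply an appeal to the literature.
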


\section{The proof}

We first show that in a grid not many vertices can be affected by
$k$ recursive deletions. This is due to the following unbreakability
property of grids.

\medskip
A {\em{separation}} in a graph $G$ is a
pair of vertex subsets $A,B\subseteq V(G)$ such that \mbox{$A\cup B=V(G)$}
and there is no edge with one endpoint in $A\setminus B$ and the other
in $B\setminus A$. The {\em{order}} of the separation $(A,B)$ is the
size of its {\em{separator}} $A\cap B$.

For $q,k\in \N$, a graph $G$ is {\em{$(q,k)$-unbreakable}} if for
every separation $(A,B)$ of~$G$ of order at most~$k$, we have
$|A|\leq q$ or $|B|\leq q$.

\begin{lemma}\label[lemma]{lem-unbreak-weak}
  For all integers $k,m$, the $m\times m$-grid $G_{m\times m}$ is
  $(k^2,k)$-unbreakable.
\end{lemma}

\begin{proof}
  Let $G=G_{m\times m}$ and let $A,B$ be a separation of order at most
  $k$ in $G$.
  We show that $|A|\le k^2$ or~$|B|\le k^2$.
  The statement is trivially true if $m\le k$ as then $|V(G)|\le k^2$.
  Hence, assume that~$m>k$, and let $\z \coloneqq z_1,\ldots,z_k = A\cap B$.

  We call a row or a column of $G$ {\em untouched} if it does not intersect $\z$ and {\em touched} if it does intersect~$\z$.
  As there are at least $k+1$ rows (resp.\ columns) there is at least one
  untouched row (resp.\ column). An untouched row (resp.\ column) must be
  contained in one part of the separation.
  Assume w.l.o.g.\ that $A$ contains an untouched row. Then $A$ contains all
  untouched columns and, by symmetry, all untouched rows.
  Therefore
  every element in $B$ must be part of a touched row and a touched column, and, as there are at
  most $k$ touched rows resp.\ columns, there are at most $k^2$ such elements.
\end{proof}

Observe that when a graph with at least $3q$ vertices is $(q,k)$-unbreakable,
then removing any~$k$ vertices from $G$ leaves all but at most $q$
vertices in the same connected component. Hence, this component does
not break in the recursive process, and in particular, we delete at most
$k$ vertices from that component. On the other hand, the small part
of size at most~$q$ can possibly be completely eliminated in the recursive
process.

\begin{corollary}[of~\cref{lem-unbreak-weak}]\label[corollary]{cor-unbreak-grid-minor}
  For all integers $k,m$ with $k < m$, for all connected graphs~$G$ for which
  there exists a minor model $\mathcal{G}_{m,m} = \{H_{i,j} ~|~ 1 \le i,j\le m\}$
  that subsumes all vertices of $G$, $k$ rounds of recursive elimination on~$G$
  can delete vertices in at most $k^2$ many different branch sets.
\end{corollary}

\begin{definition}\label{def-safe-planar}
Let $G$ be a $K_5$-minor-free graph and let $k,d\in \N$.
Assume there exists a minor model $\mathcal{G}_{m,m}=\{H_{i,j}\mid
1\leq i,j\leq m\}$ (for $m\geq 4k+7$) that subsumes all
vertices of $G$ and that induces a supergraph of
the grid $G_{m,m}$. We call the branch set $H_{i,j}$ \emph{$(k,d)$-safe}
if $2k+3\leq i,j\leq m-2k-3$ and if $H_{i',j'}$ contains no vertex of degree
at least $d+1$ (in $G$) for $|i-i'|,|j-j'| \le 2k+2$.
\end{definition}

\begin{lemma}\label[lemma]{lem-irrelevant}
Let $G$ be a $K_5$-minor-free graph and let $k,d\in \N$.
Assume there exists a minor model $\mathcal{G}_{m,m}=\{H_{i,j}\mid
1\leq i,j\leq m\}$ (for $m\geq 4k+7$) that subsumes all
vertices of $G$ and that induces a supergraph of
the grid $G_{m,m}$. Assume $H_{i,j}$ is $(k,d)$-safe.
Let $a\in V(H_{i,j})$, let $B\subseteq V(G)\setminus \{a\}$
with~$|B|\leq k$ and let~$x,y\in V(G)\setminus (B\cup \{a\})$
be of degree at least $d+1$.
Then $x$ and $y$ are connected in $G-B$ if and only if
$x$ and $y$ are connected in $G-B-a$.
\end{lemma}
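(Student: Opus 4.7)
The ``if'' direction is immediate since $G-B-a$ is a subgraph of $G-B$. For the ``only if'' direction, the plan is to construct a cycle $C\subseteq G-B-a$ enclosing $H_{i,j}$ and then argue, using $K_5$-minor-freeness, that $C$ is a vertex-separator in $G$; this will let us reroute any $x$-$y$ path through $a$ along $C$.

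For $\ell=1,\ldots,2k+2$, let $\mathcal{L}_\ell=\{(i',j'):\max(|i-i'|,|j-j'|)=\ell\}$ be the $\ell$-th concentric cycle of $G_{m,m}$ around $(i,j)$; by $(k,d)$-safety each $\mathcal{L}_\ell$ lies strictly inside $G_{m,m}$. The branch sets $\{H_{i',j'}:(i',j')\in\mathcal{L}_\ell\}$ are pairwise disjoint, connected, and successively joined by edges of $G$ (because the minor model induces a supergraph of $G_{m,m}$), so their union carries a cycle $C_\ell$ in $G$. The cycles $C_1,\ldots,C_{2k+2}$ are pairwise vertex-disjoint and all avoid $H_{i,j}$, hence avoid $a$. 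As $|B|\le k$, at most $k$ of them meet $B$, so at least one, say $C=C_\ell$, lies entirely in $G-B-a$.

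Next I would show that $C$ separates the ``inside'' $I=\bigcup_{\max(|i-i'|,|j-j'|)<\ell}H_{i',j'}$ from the ``outside'' $O=\bigcup_{\max(|i-i'|,|j-j'|)>\ell}H_{i',j'}$ in $G$. Since the minor model subsumes all vertices of $G$, this is the crux of the argument: from a hypothetical $G-C$ path between $I$ and $O$, the plan is to combine its two endpoint branch sets, the shortcut itself, and three further branch sets extracted from the surrounding grid into five pairwise adjacent connected subgraphs, yielding a $K_5$ minor of $G$ and contradicting the hypothesis. The safety radius $2k+3$ in \cref{def-safe-planar} is what guarantees enough ambient grid around $C$ to carry out this routing. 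Granted the separator property, the proof finishes quickly: by $(k,d)$-safety the high-degree vertices $x$ and $y$ lie in branch sets at $\ell_\infty$-distance $>2k+2$ from $(i,j)$, hence in $O$, while $a\in I$. Any $x$-$y$ path in $G-B$ that visits $a$ must enter $I$ through some vertex of $C$ and exit through another; replacing the subpath inside $I$ by a walk along the connected cycle $C\subseteq G-B-a$ produces an $x$-$y$ walk, and therefore an $x$-$y$ path, in $G-B-a$.

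The main obstacle will be the separator claim: making the $K_5$ extraction from a hypothetical shortcut explicit. The bookkeeping of how to bend the grid to exhibit five pairwise adjacent branch sets, and the verification that the safety radius $2k+3$ is the right quantity for this construction, is the technical heart of the argument; everything else reduces to elementary pigeonhole on the disjoint cycles $C_1,\ldots,C_{2k+2}$ and a standard path-rerouting argument.
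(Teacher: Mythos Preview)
Your overall strategy---concentric separating layers, pigeonhole to find one avoiding $B$, then rerouting---is exactly the paper's approach. However, there is a genuine gap in the separator claim: a single-thickness $\ell_\infty$-ring $\mathcal{L}_\ell$ need \emph{not} separate $I$ from $O$ in $G$. The minor model only induces a \emph{supergraph} of the grid, so extra adjacencies between branch sets are permitted, and the $K_5$ argument you allude to only rules out edges between branch sets at grid distance strictly greater than~$2$. But a branch set just inside your ring, say $H_{i+\ell-1,j}$, and one just outside, say $H_{i+\ell+1,j}$, are at grid distance exactly $2$; an edge between them can be present without creating a $K_5$ minor (indeed, such an edge can be added to a planar grid while keeping planarity), and it jumps over your cycle $C_\ell$. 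So the ``technical heart'' you flag is not merely bookkeeping: with thickness-$1$ rings the $K_5$ extraction simply fails.

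The paper's fix is to take rings of thickness~$2$: for $1\le\ell\le k+1$, let $Y_\ell$ be the subgraph induced by all branch sets at $\ell_\infty$-distance $2\ell-1$ or $2\ell$ from $(i,j)$. Then the inside of $Y_\ell$ reaches only to radius $2\ell-2$ and the outside starts at radius $2\ell+1$, a gap of $3>2$, so any edge across $Y_\ell$ would join branch sets at grid distance $>2$ and force a $K_5$ minor. There are $k+1$ such disjoint thick rings, $|B|\le k$, so one avoids $B$ entirely; your rerouting step then goes through verbatim with this $Y_\ell$ in place of $C_\ell$. In effect, you should pair up your $2k+2$ thin rings into $k+1$ thick ones; the count $2k+2$ in the safety radius is precisely what is needed for this.
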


\begin{figure}[!b]
    \center
    \includegraphics{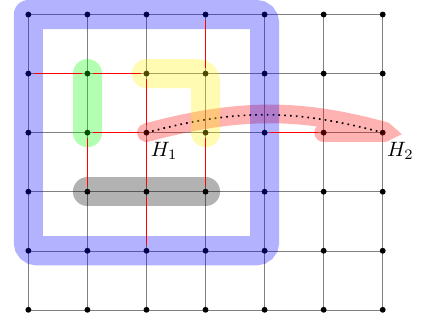}
        \caption{Construction of a $K_5$ minor as soon as a branch set (here $H_1$) connected to a branch set (here $H_2$) that is at distance more than 2 in the grid~\cite[Figure 7.10]{book15parameterized-algorithms}. The blue part marks the ``outer part'' of the border of thickness $2$, the ``inner part'' decomposes into a green, yellow and gray part.}  \label{k5minor}
  \end{figure}

\begin{proof}
Assume that $x$ and $y$ are connected in $G-B$ and let $P$ be a path
witnessing this. Assume that this path contains the vertex $a$.

\medskip
We define sets $X_\ell$ for $0\leq \ell\leq k+1$ of branch sets as follows.
Let $X_0$ be the set consisting only of $H_{i,j}$ and for $\ell\geq 1$
let $X_\ell$ be the set of all $H_{i',j'}$ with $2\ell-1 \leq |i-i'|, |j-j'|\leq 2\ell$
 that do not already belong to $X_{\ell-1}$.
The sets $X_\ell$ are the borders (of thickness $2$)
of the $(4\ell+1)\times (4\ell+1)$-subgrid around $H_{i,j}$. Observe that
the vertices $x$ and $y$ do not belong to any of the $X_\ell$, as
$H_{i,j}$ is $(k,d)$-safe by assumption. For $0\leq \ell\leq k+1$ let
$Y_\ell$ be the subgraph of $G$ induced by the vertices of $X_\ell$.

\smallskip
We claim that for $1\leq \ell\leq k+1$, the sets $Y_\ell$ are
connected sets that separate $a$ from~$x$ and analogously
$a$ from $y$. Clearly, the $Y_\ell$ are connected. Now observe that there is
no edge between a vertex of $\bigcup_{0\leq i\leq \ell-1} Y_i$
and a vertex of $G-\bigcup_{0\leq i\leq \ell} Y_i$.
The existence of such a connection would create a $K_5$ minor,
see~\cite[Figure 7.10]{book15parameterized-algorithms} or \cref{k5minor}.
Hence, any path between $a$ and $x$ (or $y$) must pass through
$Y_\ell$.

\medskip
As $|B|\leq k$, there is one $Y_\ell$ with $1\le \ell \le k+1$ that does not intersect $B$.
Let $u$ be the first time that $P$ visits $Y_\ell$ on its way from
$x$ to $a$ and let $v$ be the last time that $P$ visits~$Y_\ell$
on its way from $a$ to~$y$. As $Y_\ell$ is connected, we can
reroute the subpath between~$u$ and~$v$ through $Y_\ell$ and
thereby construct a path between $x$ and $y$ in $G-B-a$.
\end{proof}

  Note that in the proof it is important that all vertices belong to some branch set. Otherwise, we could
  have a vertex $x$ in $G$ that does not belong to any branch set while being adjacent to $H_{i,j}$ and
  the whole argumentation would fail.

\begin{corollary}\label[corollary]{cor-irrelevant}
Let $G$ be a $K_5$-minor-free graph and let $k,d\in \N$.
Assume there exists a minor model $\mathcal{G}_{m,m}=\{H_{i,j}\mid
1\leq i,j\leq m\}$ (for $m\geq 4k+7$) that subsumes all
vertices of~$G$ and that induces a supergraph of
the grid $G_{m,m}$. Assume $H_{i,j}$ is $(k,d)$-safe. Then every
vertex $a\in H_{i,j}$ is {\em irrelevant}, i.e., $G - a\in \Cc_{k,d}$
if and only if $G\in \Cc_{k,d}$.
\end{corollary}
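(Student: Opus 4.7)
The plan is to prove the biconditional by verifying its two implications separately. The easy direction $G \in \Cc_{k,d} \Rightarrow G - a \in \Cc_{k,d}$ follows from monotonicity of $\ed_d$ under vertex deletion: I would restrict any elimination order to degree $d$ for $G$ to $V(G) \setminus \{a\}$ and verify the clauses of \cref{def:elim-order} to obtain an elimination order to degree $d$ for $G - a$ of at most the same depth.

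For the substantive direction $G - a \in \Cc_{k,d} \Rightarrow G \in \Cc_{k,d}$ I would proceed by induction on $k$. The base case $k = 0$ reduces to $G - a \in \Cc_d \Rightarrow G \in \Cc_d$, which follows from $(0,d)$-safety: the vertex $a$ and each neighbor of $a$ lie in branch sets $H_{i',j'}$ with $|i-i'|, |j-j'| \le 1 \le 2k+2$ and hence have degree at most $d$ in $G$. For the inductive step I would first note that $G$ is connected, since the minor model subsumes $V(G)$ and induces a connected supergraph of $G_{m,m}$. Applying \cref{lem-irrelevant} with $B = \emptyset$ to all pairs of vertices of degree at least $d+1$ then forces all such high-degree vertices of $G$ to lie in a single connected component $C_1$ of $G - a$; every other component of $G - a$ contains only vertices of degree at most $d$ and thus lies in $\Cc_d$. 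In particular $\ed_d(G - a) = \ed_d(C_1) \le k$.

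Assuming $G \notin \Cc_d$, it suffices to find a vertex $v \in V(G)$ such that every component of $G - v$ has elimination distance at most $k - 1$ to $\Cc_d$. I would take $v$ to be a root of a depth-$k$ elimination order of $C_1$ chosen outside the $2k$-neighborhood of $H_{i,j}$ in the grid. For such a $v$, the grid minor of $G$ restricted to the component $X$ of $G - v$ containing $a$ preserves all branch sets within the $2k$-neighborhood of $H_{i,j}$, so $H_{i,j}$ is still $(k-1,d)$-safe in $X$. The components of $G - v$ other than $X$ are precisely those sub-components of $C_1 - v$ not adjacent to $a$, each of elimination distance at most $k - 1$. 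Moreover $X - a$ is the disjoint union of the $\Cc_d$-components of $G - a$ with those sub-components of $C_1 - v$ adjacent to $a$, all of elimination distance at most $k - 1$, so $\ed_d(X - a) \le k - 1$. Applying the inductive hypothesis to $X$ with irrelevant vertex $a$ then yields $\ed_d(X) \le k - 1$, whence $G \in \Cc_{k,d}$.

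The main obstacle, and the technical crux of the argument, is to show that a root $v$ of a depth-$k$ elimination order of $C_1$ can always be chosen outside the $2k$-neighborhood of $H_{i,j}$. I expect this to be handled by iterated application of \cref{lem-irrelevant}: low-degree vertices inside the safe region around $H_{i,j}$ are never genuine obstructions to lowering the elimination distance of $C_1$, so any elimination order of $C_1$ can be rerouted to place its root outside the safe neighborhood without increasing its depth.
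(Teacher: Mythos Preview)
Your inductive framework is natural, but the step you flag as ``the main obstacle'' is not a detail to be filled in later --- it is essentially the whole corollary. You need that some depth-$k$ elimination order of $C_1$ has its root outside the safe neighborhood of $H_{i,j}$, yet you offer only the expectation that ``iterated application of \cref{lem-irrelevant}'' will deliver this. But \cref{lem-irrelevant} says only that deleting $a$ preserves connectivity between high-degree vertices; it says nothing about where roots of optimal elimination orders may be placed. Proving that a low-degree vertex in the safe region is never a forced root amounts to proving that such a vertex is irrelevant for $C_1$ --- which is the very statement you are trying to establish for $G$. There is also a secondary gap: to invoke the induction hypothesis on the component $X$ of $G-v$ containing $a$, you need a grid minor model that subsumes \emph{all} vertices of $X$ (this is a hypothesis of the corollary), and $X$ may contain many vertices outside the $2k$-neighborhood of $H_{i,j}$; you give no construction of such a model.

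The paper does not induct on $k$ at all. It fixes a canonical elimination order $\le_H$ for $H=G-a$ (as in \cref{lem:cannonical-order}) and extracts from it a chain $m_1,\ldots,m_\omega$ with $\omega\le k$: at each step, $m_i$ is the $\le_H$-minimal element of the current component $A_{i-1}\ni a$ that still has a vertex of degree $\ge d+1$ above it. The role of \cref{lem-irrelevant} is precisely to show that this $m_i$ is \emph{unique}: two incomparable candidates would yield high-degree vertices $v,v'$ in different $\le_H$-subtrees, yet $v,v'$ are connected in $G-\{m_1,\ldots,m_{i-1}\}$ and hence, by \cref{lem-irrelevant}, also in $G-a-\{m_1,\ldots,m_{i-1}\}$, contradicting the structure of $\le_H$. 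One then defines $\le_G$ by keeping $\le_H$ unchanged off the residual sets $T_i$ and $A_\omega$, and declaring $a$ together with everything in the $T_i$ and $A_\omega$ to be $\le_G$-maximal with predecessor chain $m_1<_G\cdots<_G m_\omega$. A direct case analysis then verifies that $\le_G$ is an elimination order to degree $d$ of depth at most $k$ for $G$. No root relocation and no rebuilding of grid minors is needed.
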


\begin{proof}
Let $H\coloneqq G-a$. We have to prove that $H\in \Cc_{k,d}$
implies $G\in \Cc_{k,d}$. Hence, assume $H\in \Cc_{k,d}$.
Let $\le_H$ be an elimination order to degree $d$ of depth $k$ for $H$.
We also assume that~$\le_H$ satisfies the property
of~\cref{lem:cannonical-order}, that is, for every $v\in V(G)$,
if $C,C'$ are distinct connected components of $G-V_{\leq_H v}$,
then the vertices of $C$ and $C'$ are incomparable with respect to
$\leq_H$. Let $A_0$ be the connected component of $G$ containing
$a$. Note that $A_0$ may break into multiple connected components in
$H=G-a$. We show below that at most one of these connected components contains a vertex of degree excedeeing~$d$.

\medskip
For $1\le i\le k$, we define inductively:
  \begin{itemize}
    \item $m_i$ as the unique $\le_H$-minimal element of $A_{i-1}\setminus\{a\}$ (if it exists) such that there exists a vertex~$v$ with $m_i\le_H v$ and of degree at least $d+1$ in $H[A_{i-1}\setminus\{a\}]$.
    If there is no such element $m_i$, the process stops.

\smallskip
    Let us prove that there is at most one candidate for $m_i$. Assume that there are incomparable $m$ and $m'$ satisfying these conditions. This means that there are vertices $v$ and $v'$ of degree at least $d+1$ in $A_{i-1}\setminus\{a\}$ (hence of degree at least $d+1$ in $G$) with $m \le_H v$ and $m'\le_H v'$. Note that we have $m\not\le_H v'$ because $\le_H$ is a tree order.
    By \cref{lem:cannonical-order}, we have that $v,v'$ are both in $A_{i-1}\setminus\{a\}$, i.e.~in the connected component of~$a$ in $G - \{m_1,\ldots,m_{i-1}\}$ (it follows also by induction that
    $\{m_1,\ldots, m_{i-1}\}=V_{\leq_H m_{i-1}}$, hence we may
    apply the lemma). Hence, $v$ and $v'$ are connected in $G - \{m_1,\ldots,m_{i-1}\}$.
    With \cref{lem-irrelevant}, we also have that $v$ and $v'$ are connected in $G - \{a, m_1,\ldots,m_{i-1}\}$.

    We take a witness path from $v$ to $v'$. Since $m\le_H v$ and $m\not\le_H v'$, this path must contain two adjacent vertices $w$, $w'$ with $m\le_H w$ and $m\not\le_H w'$. This contradicts the fact that~$\le_H$ is an elimination order satisfying the property of \cref{lem:cannonical-order}.
    Therefore, there is at most one possible such $m_i$.

    \item $T_i := \{v\in A_{i-1} ~:~ m_i \not\le_H v\}$. These are the vertices that are disconnected from $A_i$ in $H$, but not in $G$. As explained above, they all have degree at most $d$.
    \item $A_i$ as the connected component of $a$ in $G - \{m_1,\ldots,m_i\}$. Note that again, $A_i-\{a\}$ may be a union of connected components in $H-\{m_1,\ldots,m_i\}$.
  \end{itemize}
  The processes stops after at most $k$ rounds.
  When the process stops, we have then defined $m_i$, $T_i$, and~$A_i$ up to $i=\omega$, with $\omega \le k$ and every element in $A_\omega$ has degree at most $d$ in~$H[A_\omega]$.

\medskip
  We then define the new order $\le_G$ as follows:
  \begin{itemize}
    \item for all $x,y$ other than $a$ and that are not in any of the $T_i$ nor in $A_\omega$, we have $x\le_G y$ if and only if $x\le_H y$,
    \item for all $x$ in $A_\omega \cup \bigcup\limits_{i\le\omega}T_i$, we have $m_i \le_G x$ for all $i\le \omega$. Note that $a$ is in $A_\omega$.
  \end{itemize}
  Note that all the elements in $A_\omega$, and the $T_i$'s, together with $a$ are $\le_G$-maximal.

\medskip
  We now prove that this new order is indeed an elimination order to degree $d$ of depth~$k$ for $G$.
  We have that $\le_G$ is a tree order and that it has depth at most $k$. Let us now take a vertex $b$ and study $S_b$. Recall the definition of $S_b$ from \cref{def:elim-order}. As we have two orders, we distinct $S^G_b$ from $S^H_b$.

\medskip
  First, note that for $b=m_i$, we have $S^G_{m_i} = S^H_{m_i} = \emptyset$. So we don't have to check anything.

  Then, assume that $b$ is in $A_\omega \cup \bigcup\limits_{i\le\omega}T_i$. Then $b$ is both $\le_G$ and $\le_H$-maximal. \\ Additionally,~$S_b^G=S_b^H$ if $b$ is not adjacent to $a$, and $S_b^G=S_b^H\cup \{a\}$ otherwise. Note that as $a$ is~$(k,d)$-safe, it has no neighbor of degree $d+1$, hence, if $b$ is adjacent to $a$ it has degree at most $d-1$ in $H$, and $|S_b^H|\le d-1$ so $|S_b^G|\le d$.
  For the last point of \cref{def:elim-order}, note that for any $v\in S^G_b$, $v$ also belongs to~$A_\omega \cup \bigcup\limits_{i\le\omega}T_i$, we then have $\{w~:~w <_G v\}=\{w~:~w <_G b\} = (m_i)_{i<\omega}$.

\medskip
  Finally, we look at the case where $b$ is not in $A_\omega$, is not $m_i$ nor in $T_i$ for any $i\le\omega$. In this case, we have that $b$ cannot have neighbors in $A_\omega$ nor any of the $T_i$ for $i\le\omega$. To see this, assume that there is a $v\in T_i$ neighbor to $b$. This implies that $b$ is in $A_{i-1}$, as it is connected to~$v$, the latter being in~$A_{i-1}$, which is the connected component of $a$ in $G - \{m_1,\ldots,m_{i-1}\}$. As $b\not\in T_i$, we have~$m_i \le_H b$ and~$m_i \not\le_H v$
  which contradict that $\le_H$ is an elimination order. This contradiction also holds if $b$ has a neighbor in $A_\omega$ as this would imply that $b\in A_\omega$.

\medskip
  Therefore, in this final case, $S^G_b = S^H_b$. This also holds for any neighbor of $b$. Hence for any vertex $v$ in $S^G_b$ we have that:\\
  $\{w~:~w <_G b\}=\{w~:~w <_H b\} =\{w~:~w <_H v\}=\{w~:~w <_G v\}$.

\smallskip
  To conclude, we have that $\le_G$ is indeed an elimination order to degree $d$ of depth $k$ for~$G$. This ends the proof that $a$ is irrelevant.
\end{proof}

We can now prove our main result, \cref{thm-main}.

\begin{proof}
  Let $k,d$ be two integers and $G$ be a connected $K_5$-minor free graph.
  We set $h(k) := (4k+7)\cdot k$.
  Let~$g(\cdot)$ be the function from \cref{thm:grid-all}.

  \begin{description}
    \item First run the algorithm of \cref{thm:grid-all} with parameter $h(k)$ (or, if $G$ is planar, use the algorithm of~\cref{thm:grid-theorem}).

    \item[Case 1.] The algorithm outputs a tree-decomposition of width $g(h(k))$, or of width $5h(k)$ if $G$ is planar.
    We then use Courcelle's Theorem (\cref{thm:courcelle}) to decide
    whether $G\in \Cc_{k,d}$.

    \item[Case 2.]
    The algorithm outputs a minor model of the $h(k) \times h(k)$-grid in $G$.
    Furthermore we make sure that the set of branch sets subsumes
    all vertices. We then distinguish two cases depending on the number of
    branch sets that contain a node of degree at least $d+1$.

    \item[Case 2.1]
    If more than $k^2$ branch sets contain a node of degree at
    least $d+1$, then using \cref{cor-unbreak-grid-minor}, we conclude
    that $G\not\in \Cc_{k,d}$.

    \item[Case 2.2.] There are at most 
    $k^2$ branch sets containing a vertex of degree at least $d+1$, and therefore, at most 
    $k^2\cdot(4k+7)^2$ branch sets that are at distance at most $2k+3$ to a vertex of degree at least~$d+1$. As $h(k)$ is large enough, there is a $(k,d)$-safe branch set which, by \cref{cor-irrelevant}, implies the existence of an irrelevant vertex. We iteratively
     eliminate irrelevant vertices until we are in one of Case 1 or Case 2.1.
  \end{description}

  \vspace*{-9mm}
\end{proof}

We now do a quick complexity analysis of the algorithm. The algorithm of \cref{thm:grid-all} runs in time $n^{O(1)}$. Then performing either Courcelle's Theorem, finding $k^2$ branch set with vertices of degree~$d+1$, or finding an irrelevant vertex can be done in linear time, i.e.~$f(k,d)\cdot n$ for some computable function $f$. In the worst case, we might end up performing Case 2.2 up to $|G|$ many times before concluding via Case 1 or Case 2.1.
Therefore, the overall complexity is $f(k,d)\cdot n^c$ for a computable
function~$f$ and constant~$c$.

\medskip
In the slightly more restrictive case where $G$ is planar, the algorithm of \cref{thm:grid-theorem} runs in time~$O(n^2)$. This improves the complexity of the overall algorithm yielding a total running time of~$f(k,d)\cdot n^3$ for a computable
function~$f$.

\end{document}